\newtheorem{assumption}{Assumption}
\newtheorem{lemma}{Lemma}
\newtheorem{remark}{Remark}
\newtheorem{theorem}{Theorem}
\newtheorem{definition}{Definition}
\newtheorem{corollary}{Corollary}
\newcommand{\approptoinn}[2]{\mathrel{\vcenter{
  \offinterlineskip\halign{\hfil$##$\cr
    #1\propto\cr\noalign{\kern2pt}#1\sim\cr\noalign{\kern-2pt}}}}}
\newcommand{\appropto}{\mathpalette\approptoinn\relax}
\xpatchcmd{\proof}{\topsep0\p@\@plus0\p@\relax}{}{}{}
\pgfplotsset{width=10\columnwidth /10, compat = 1.13, 
	height = 55\columnwidth /100, grid= major, 
	legend cell align = left, ticklabel style = {font=\scriptsize},
	every axis label/.append style={font=\small},
	legend style = {font=\tiny},title style={yshift=-7pt, font = \small} }
\title{\LARGE \bf
Adaptive Low-Pass Filtering using Sliding Window Gaussian Processes
}
\author{Alejandro J. Ord\'o\~nez-Conejo$^{1*}$, Armin Lederer$^{2*}$ and Sandra Hirche$^{2}$
\thanks{$^{1}$Alejandro J. Ord\'o\~nez-Conejo is with Costa Rica Institute of Technology, 30101, Cartago, Costa Rica
        {\tt\small ajoseoc@gmail.com}}%
\thanks{$^{2}$Armin Lederer and Sandra Hirche are  with the Department of Electrical and Computer Engineering, Technical University of Munich, 80333 Munich, Germany
	{\tt\small [armin.lederer, 
	hirche]@tum.de}}%
\thanks{$^{*}$These authors contributed equally.}%
}
\begin{document}

\setlength{\textfloatsep}{14pt}
\setlength{\floatsep}{14pt}

\maketitle
\thispagestyle{empty}
\pagestyle{empty}

\begin{abstract}
When signals are measured through physical sensors, they are perturbed by noise. To reduce noise, low-pass filters are commonly employed in order to attenuate high frequency components in the incoming signal, regardless if they come from noise or the actual signal. Therefore, low-pass filters must be carefully tuned in order to avoid significant deterioration of the signal. This tuning requires prior knowledge about the signal, which is often not available in applications such as reinforcement learning or learning-based control. In order to overcome this limitation, we propose an adaptive low-pass filter based on Gaussian process regression. By considering a constant window of previous observations, updates and predictions fast enough for real-world filtering applications can be realized. Moreover, the online optimization of hyperparameters leads to an adaptation of the low-pass behavior, such that no prior tuning is necessary. We show that the estimation error of the proposed method is uniformly bounded, and demonstrate the flexibility and efficiency of the approach in several simulations.

\end{abstract}

\section{INTRODUCTION}

Control systems interact with the physical world by providing  signals to actuators based on measurements received from the plant. Since the physical quantities are measured using sensors, they are perturbed by noise resulting from various sources such as limited resolution of sensors, electromagnetic interference from the environment and quantization in digital to analog conversion. In order to reduce this measurement noise in signals, various forms of filters can be found in almost all real-world control systems.

These filters must inherently trade-off  noise attenuation and signal deterioration, since both goals cannot be achieved simultaneously \cite{Orfanidis2016}. As physical systems usually operate at comparatively low frequencies, this trade-off is usually achieved using low-pass filters, which strongly attenuate input signals above a cutoff frequency. They can be realized in digital systems based on linear system theory, which leads to approaches with infinite impulse response, e.g., Butterworth or Chebyshev filters, or with finite impulse response, e.g., moving average and Savitzky-Golay filters. As these filters attenuate input signals above the cutoff frequency regardless of their source, i.e., both the actual signal and the noise, they must be carefully tuned to avoid a deterioration of the signal. However, this requires prior knowledge about the signal and the noise distribution, which is often not available, in particular in learning systems. For example, in model-free reinforcement learning, no system model is known or even inferred, such that only little information about the system trajectories is known in advance.

When data can be processed off-line, such a lack of information has been overcome in many control applications by inferring models with Gaussian process regression \cite{Rasmussen2006}, which is a non-parametric supervised machine learning technique. Due to its Bayesian background, Gaussian processes naturally allow for an implicit bias-variance trade-off, and can efficiently learn from sparse data. Moreover, Gaussian processes also provide a measure for their predictive uncertainty, such that information about the reliability of Gaussian process models is available. This allows the derivation of prediction error bounds, which are based either on the theory of reproducing kernel Hilbert spaces \cite{Srinivas2012} or on Bayes theory~\cite{Lederer2019}. These advantageous properties have led to a wide spread usage of Gaussian processes in control, see, e.g., \cite{Kocijan2016}.

Despite their common application in control, Gaussian processes have been rarely considered in filtering problems so far. Probably most notable is their application as forward model in Kalman filters \cite{Deisenroth2009b, Ko2009}, but they have also been employed for attenuating noise in measured signals \cite{Calderhead2009, Wenk2020}. In the latter example, a Gaussian process is fitted to obtain a mapping between the measurement time and the signal state, which is exploited in a secondary step to identify the parameters of the signal generating differential equation using gradient matching. Thereby, the parameters of the differential equation can be obtained in a computational and data efficient way.

While such smoothing approaches are computationally efficient for parameter estimation, they are far too slow to be directly applicable in filtering problems with continuous streams of data due to the cubic complexity of Gaussian process regression. This limitation of Gaussian processes has been addressed by many authors recently. For example, the approaches in \cite{Gijsberts2013, Lederer2021b} aim to construct global approximations, which achieve update and prediction rates suitable for learning in control loops. However, this comes at the price of high memory requirements or significantly weaker theoretical guarantees. 
A different approach for enabling online learning with Gaussian processes relies on the restriction to a training set with constant number of samples, which gets updated after observing new samples \cite{Petelin2011}. By simply removing the oldest data in the set, the updates can be efficiently realized with constant complexity at high rates~\cite{Meier2016}. Moreover, these methods have been demonstrated to be able to continuously adapt to unknown system dynamics in experiments \cite{Meier2016a}.

Due to these promising results, we follow the idea of employing varying data sets with constant sizes for updating Gaussian processes online. Based on this idea, we develop an efficient algorithm called sliding window Gaussian process, which achieves high update and prediction rates. By employing this algorithm to learn a map between measurement times and signal values, it can be straightforwardly applied for filtering arbitrary signals. Using a novel approach, which does not rely on reproducing kernel Hilbert space theory and Bayesian properties, we prove that the filtered signal exhibits a bounded error under weak assumptions. The low-pass behavior of the proposed filter is investigated in simulations, in which we demonstrate its capability of adapting to input signals using the online optimization of hyperparameters of the Gaussian process. Finally, a comparison with commonly used filters clearly demonstrates an advantageous signal recovery over a wide range of frequencies.

The remainder of this work is structured as follows. In \cref{sec:prob}, the considered problem is formally described. The proposed low-pass filter using sliding window Gaussian processes is introduced and proven to provide bounded estimation errors in \cref{sec:filt}. In \cref{sec:eval}, the low-pass behavior of the sliding window Gaussian process is compared to commonly used filters. \cref{sec:conc} concludes this paper.\looseness=-1

\section{Problem Description}\label{sec:prob}
We consider a nonlinear autonomous system\footnote{Notation: 
Lower/upper case bold symbols denote vectors/matrices, 
$\mathbb{R}_+$/$\mathbb{N}_+$ all real/integer positive 
numbers, $\bm{I}_n$ the $n\times n$ identity matrix, $\bm{1}_n=\begin{bmatrix}1&0&\ldots&0\end{bmatrix}^T\in\mathbb{R}^n$ the unit vector, 
$\|\cdot\|$ the Euclidean norm, and  
$\lfloor\cdot\rfloor$ the floor operator. Scalar operators applied to vectors and matrices denote an element-wise operation.}
\begin{align}
    \dot{\bm{x}}&=\bm{f}(\bm{x},\bm{u},t),
\end{align}
where $\bm{x}\in\mathbb{X}\subseteq \mathbb{R}^d$ denotes the state, $\bm{u}\in\mathbb{U}\subseteq \mathbb{R}^p$ is the control input and $\bm{f}:\mathbb{X}\times\mathbb{U}\times\mathbb{R}_{0,+}\rightarrow\mathbb{R}^d$ is the nonlinear dynamic. In order to exclude systems, whose solutions have finite escape times, we make the following assumption.
\begin{assumption}
The function $\bm{f}(\cdot,\cdot,\cdot)$ is bounded for all $\bm{x}\in\mathbb{X}$, $\bm{u}\in\mathbb{U}$, $t\in\mathbb{R}_{0,+}$.
\end{assumption}
This assumption is satisfied, e.g., for time-invariant, continuous functions $f(\cdot,\cdot,\cdot)$ on compact domains $\mathbb{X}$ and $\mathbb{U}$, but also holds for many frequently found systems in practice such as, e.g., Euler-Lagrange systems with bounded control inputs. Therefore, it is not restrictive in general.

Since we cannot continuously measure the true state in real-world systems, we assume access to noisy measurements
\begin{align}
  \bm{y}^{(n)} &=\bm{x}(t^{(n)})+\bm{\epsilon}^{(n)}  
\end{align}
at times $t^{(n)}\in\mathbb{R}_{0,+}$, $n\in\mathbb{N}$, where $\bm{y}^{(n)}$ is the measured state at time $t^{(n)}$ disturbed by some noise $\bm{\epsilon}^{(n)}\in\mathbb{W}\subseteq\mathbb{R}^d$. For obtaining these samples, we consider a time-triggering scheme as formalized in the following.
\begin{assumption}\label{ass:data}
Data is sampled at regularly spaced time instances $t^{(n)}=n\tau$ with sampling time $\tau\in\mathbb{R}_{+}$ and aggregated into a data set $\mathbb{D}(t)=\{ (t^{(n)},\bm{y}^{(n)}) \}_{n=0}^{\lfloor\frac{t}{\tau}\rfloor}$.
\end{assumption}
Time-triggered sampling is probably the most common approach when designing control methods for continuous-time systems. Hence, this assumption is not restrictive in practice. 

Based on the obtained measurements aggregated in the data set $\mathbb{D}(t)$, we consider the problem of designing a filter for noise attenuation, which provides an estimate $\hat{\bm{x}}(t)$ of the noise free state $\bm{x}(t)$. While we cannot expect to achieve identity between $\hat{\bm{x}}(t)$ and $\bm{x}(t)$, we want to guarantee a small estimation error as defined in the following. 
\begin{definition}
The estimation error between the filtered state $\hat{\bm{x}}(t)$ and the noise free state $\bm{x}(t)$ is uniformly bounded by a function $c:\mathbb{R}_{0,+}\rightarrow\mathbb{R}_{0,+}$ if it satisfies
\begin{align}
    \|\hat{\bm{x}}(t)-\bm{x}(t)\|\leq c(t)\qquad \forall t\in\mathbb{R}_{0,+}.
\end{align}
\end{definition}
The goal is to derive a filter with guaranteed uniform error bound $c(\cdot)$, without posing additional assumptions on the dynamics $\bm{f}(\cdot,\cdot,\cdot)$ or the noise distribution. In order to achieve this goal, we derive a non-parametric online learning method based on Gaussian process regression and employ it for filtering the noisy data stream $\mathbb{D}(t)$ as explained in the following.\looseness=-1

\section{Adaptive Noise Attenuation with Sliding Window Gaussian Processes}\label{sec:filt}

Since the proposed adaptive filter, for noise attenuation with guaranteed estimation errors, is based on Gaussian process regression, we briefly explain this non-parametric supervised machine learning method in \cref{subsec:GP}. In order to make Gaussian processes applicable in filtering applications, in \cref{subsec:SW-GP}, we propose a fast approximation for online learning, which relies only on the most recent data. Finally, we apply this method as an adaptive low-pass filter and derive novel estimation error bounds in \cref{subsec:GP-filter}.

\subsection{Gaussian Process Regression}\label{subsec:GP}

Gaussian processes (GP) are an infinite collection of random variables that follow joint Gaussian distributions \cite{Rasmussen2006}. 
GP regression is used to model functions $ g:\mathbb{R}^d\rightarrow\mathbb{R} $ with inputs~$ \bm{z} $. A GP is defined by a prior mean function $ m:\mathbb{R}^d\rightarrow\mathbb{R} $ and a kernel function $ k:\mathbb{R}^d\times\mathbb{R}^d\rightarrow\mathbb{R}_{0,+} $. The prior mean $m(\cdot)$ can be used to include approximate, parametric models in the regression. Since we do not assume any knowledge of this kind, we set $m(\bm{z})=0$ for all $\bm{z}\in\mathbb{R}^d$ subsequently without loss of generality. The covariance function $k(\cdot,\cdot)$ is used to encode abstract prior knowledge such as smoothness and periodicity. When no such information is known in advance, the probably most frequently used covariance function is the squared exponential kernel
\begin{align}\label{eq:SE}
	k(\bm{x},\bm{x}')=\sigma_f^2\mathrm{exp}\left(-\sum\limits_{i=1}^{\rho}\frac{(x_i-x_i')^2}{2l_i^2}\right),
\end{align}
where $\sigma_f\in\mathbb{R}_{+}$ denotes the signal standard deviation and $\bm{l}=\begin{bmatrix}l_1&\cdots&l_d \end{bmatrix}^T$ are the length scales. These parameters form the hyperparameter vector $\bm{\theta}=\begin{bmatrix} \sigma_f& \bm{l}^T& \sigma_{\mathrm{on}} \end{bmatrix}$ together with the assumed standard deviation $\sigma_{\mathrm{on}}$ of the observation noise $\bm{\epsilon}$. Given a data set $\mathbb{D}_N=\{(\bm{z}^{(n)},y^{(n)})\}_{n=1}^N$ with scalar measurements $y^{(n)}\in\mathbb{R}$, the hyperparameters are typically determined by minimizing the negative log-likelihood
\begin{align}\label{eq:nll}
	-\log p(\bm{y}_N|\bm{Z}_N,\bm{\theta}) &\!=\! \frac{1}{2}\bm{y}_N^T(\bm{K}_N\!+\!\sigma_{\mathrm{on}}^2\bm{I}_N)^{-1}\bm{y}_N
	\\
	&\!+\! \frac{1}{2}\log(\det(\bm{K}_N\!+\!\sigma_{\mathrm{on}}^2\bm{I}_N)) \!+\! \frac{n}{2}\log(2\pi),\nonumber
\end{align}
where $\bm{Z}_N=\begin{bmatrix} \bm{z}^{(1)}&\cdots&\bm{z}^{(N)} \end{bmatrix}$,  $\bm{y}_N=\begin{bmatrix}y^{(1)}&\cdots&y^{(N)} \end{bmatrix}^T$ are the concatenated training samples, and $\bm{K}_N$ is the covariance matrix, whose elements are defined via $k_{N,i,j}=k(\bm{z}^{(i)},\bm{z}^{(j)}) $, $i,j=1,\ldots,N$. 
Even though the negative log-likelihood generally is a non-convex function with respect to the hyperparameters, it is optimized using gradient-based methods in practice \cite{Rasmussen2006}. 

After the hyperparameters have been determined, the posterior distribution at an arbitrary test point $\bm{z}$ can be analytically computed under the assumption of Gaussian distributed noise $\bm{\epsilon}$. This posterior is again a Gaussian distribution with mean and variance defined as
\begin{align}\label{eq:GPmean}
    \mu_{N}(\bm{z}) &= \bm{k}_N^T( \bm{z}_{N})(\bm{K}_N + \sigma_{\mathrm{on}}^{2}\bm{I}_{N})^{-1}\bm{y}_{N}\\ 
    \label{eq:GPvariance}
        \sigma_{N}^{2} (\bm{z}) &= k(\bm{z}, \bm{z})- \bm{k}_N^T(\bm{z})(\bm{K}_N \!+\! \sigma_{\mathrm{on}}^{2}\bm{I}_{N})^{-1}\bm{k}_N(\bm{z}),
\end{align}
where the kernel vector $\bm{k}_N(\bm{z})$ is defined elementwise via $k_{N,i}(\bm{z})=k(\bm{z},\bm{z}^{(i)})$, $i=1,\ldots,N$.


\subsection{Online Learning with Sliding Window Gaussian Process}\label{subsec:SW-GP}

\DecMargin{0.9em}
\begin{algorithm}[t]
    \setstretch{1.5}
    \SetInd{0.5em}{0.5em}
    \DontPrintSemicolon
	\small
	\SetKwFunction{FSOP}{update}
	\SetKwFunction{upK}{getA}
	\SetKwFunction{hypUp}{hypUpdate}
	\SetKwFunction{grad}{gradient}
	\SetKwFunction{step}{stepUpdate}
	\SetKwProg{Fn}{Function}{:}{}
	\Fn{\FSOP{$\mathbb{W}^{(N)}\!\!$, $\bm{\theta}^{(N)}\!\!$, $\bm{\Delta}^{\!(N)}\!\!$, $\bm{\nabla}_{\mathrm{nll}}^{(N)}\!\!$, $\bm{z}^{(N+1)}\!\!$, $y^{(N+1)}$} } {
	    $\mathbb{W}^{(N+1)}\!\!\gets\! \mathbb{W}^{(N)}\cup\{(\bm{z}^{(N+1)},y^{(N+1)})\}$\;
	    \If { $|\mathbb{W}^{(N+1)}| > \bar{N}$ }{
            $ \mathbb{W}^{(N+1)} \!\!\gets\! \mathbb{W}^{(N+1)}\setminus \{(\bm{z}^{(N+1-\bar{N})},y^{(N+1-\bar{N})}) \} $ \;
        }
		$\bm{\theta}^{(N+1)} \!\!\gets\! \bm{\theta}^{(N)}+\mathrm{sign}(\bm{\nabla}_{\mathrm{nll}}^{(N)})\bm{\Delta}^{(N)} $ \;
		$\bm{A}_{\mathbb{W}^{(N+1)}}\!,\bm{\alpha}_{\mathbb{W}^{(N+1)}}\!\!\gets\!\upK(\mathbb{W}^{(N+1)}\!,\bm{\theta}^{(N+1)})\!\!$~ \textcolor{lightgray}{\% \eqref{eq:AW}, \eqref{eq:alphaW}}\;
		$\bm{\nabla}_{\mathrm{nll}}^{(N+1)} \!\!\gets\! \grad(\bm{A}_{\mathbb{W}^{(N+1)}},\bm{\alpha}_{\mathbb{W}^{(N+1)}}, \bm{\theta}^{(N)}) \!$~~ \textcolor{lightgray}{\% \eqref{eq:nablaNLL}}\;
		$\bm{\Delta}^{(N+1)}\!\! \gets\!\step(\bm{\nabla}_{\mathrm{nll}}^{(N+1)}, \bm{\nabla}_{\mathrm{nll}}^{(N)}, \bm{\Delta}^{(N)} )\!\!~$~~  \textcolor{lightgray}{\% \eqref{eq:RPROP_step}}\;
	}
	\KwRet $\mathbb{W}^{(N+1)}\!\!$, $\bm{\theta}^{(N+1)}\!\!$, $\bm{\Delta}^{(N+1)}\!\!$, $\bm{\nabla}_{\mathrm{nll}}^{(N+1)}\!\!$,  $\bm{\alpha}_{\mathbb{W}^{(N+1)}}$, $\bm{A}_{\mathbb{W}^{(N+1)}}^{-1}\!\!$
	\caption{Iterative learning with SW-GPs}
	\label{alg:SW-GP}
\end{algorithm}

Even though exact Gaussian process regression as described in \cref{subsec:GP} allows closed-form expressions for the posterior mean and variance functions, it crucially suffers from a cubically growing computational complexity in the number of training samples $N$. In order to overcome this limitation, we propose sliding window Gaussian processes (SW-GP) as outlined in Alg.~\ref{alg:SW-GP}, which rely on the idea of handling continuous data streams by employing only the most recent training samples for training a GP. In particular, after observing $N\in\mathbb{N}_+$ training samples, we restrict the training data of a GP to the subset
\begin{align}\label{eq:window}
    \mathbb{W}^{(N)}=\{ (\bm{z}^{(n)},y^{(n)}) \}_{n=N+1-\bar{N}}^N,
\end{align}
where $\bar{N}\in\mathbb{N}_+$ is a constant determining the number of samples in the set $\mathbb{W}^{(N)}$. By using these windows of data $\mathbb{W}^{(N)}$ instead of the full data set $\mathbb{D}_N$, the posterior mean and variance can be computed in constant complexity using 
\begin{align}\label{eq:muW}
    \mu_{\mathbb{W}^{(N)}}(\bm{z}) &= \bm{k}_{\mathbb{W}^{(N)}}^T(\bm{z})\bm{\alpha}_{\mathbb{W}^{(N)}}\\ 
    \sigma_{\mathbb{W}^{(N)}}^{2} (\bm{z}) &= k(\bm{z}, \bm{z})- \bm{k}_{\mathbb{W}^{(N)}}^T(\bm{z})\bm{A}_{\mathbb{W}^{(N)}}^{-1}\bm{k}_{\mathbb{W}^{(N)}}(\bm{z}),
        \label{eq:sigW}
\end{align}
where 
\begin{align}\label{eq:AW}
    \bm{A}_{\mathbb{W}^{(N)}}&=\bm{K}_{\mathbb{W}^{(N)}}+\sigma_{\mathrm{on}}^2\bm{I}_{\bar{N}}\\
    \bm{\alpha}_{\mathbb{W}^{(N)}}&=\bm{A}_{\mathbb{W}^{(N)}}^{-1}\bm{y}_{\mathbb{W}^{(N)}}
    \label{eq:alphaW}
\end{align}
and we use indexes $\mathbb{W}^{(N)}$ to emphasize that $\mathbb{W}^{(N)}$ is used for computation instead of $\mathbb{D}_N$, e.g., $\bm{K}_{\mathbb{W}^{(N)}}$ is a matrix with elements $K_{\mathbb{W}^{(N)},i,j}=k(\bm{z}^{(N-\bar{N}+i)},\bm{z}^{(N-\bar{N}+j)})$, $i,j=1,\ldots\bar{N}$. Since \eqref{eq:muW} and \eqref{eq:sigW} reduce to simple scalar products and matrix-vector multiplications given $\bm{A}_{\mathbb{W}^{(N)}}^{-1}$ and $\bm{\alpha}_{\mathbb{W}^{(N)}}$, they allow a fast computation of predictions in practice. 

Therefore, we focus on the online adaptation of the hyperparameters $\bm{\theta}^{(N)}$ to new training pairs $(\bm{z}^{(N+1)},y^{(N+1)})$ in the following, which we realize by minimizing the negative log-likelihood \eqref{eq:nll} iteratively. Our numerical optimization scheme employed for this purpose is based on RPROP \cite{Riedmiller1993}, which has been demonstrated to outperform other elaborate methods while maintaining relatively low computational complexity \cite{Blum2013}. In contrast to existing work, we merely perform one update iteration for each data set $\mathbb{W}^{(N)}$ in order to limit the number of computations executed for each new data pair. This results in the adaptation rule 
\begin{align}\label{eq:hyp_Update}
    \bm{\theta}^{(N+1)} = \bm{\theta}^{(N)}+\mathrm{sign}(\bm{\nabla}_{\mathrm{nll}}^{(N)})\bm{\Delta}^{(N)},
\end{align}
where the gradient of the negative log-likelihood can be efficiently computed using 
\begin{align}\label{eq:nablaNLL}
	\nabla_{\mathrm{nll},j}^{(N)}&=\frac{\partial}{\partial \theta_{j}}(-\log p(\bm{y}_{\mathbb{W}^{(N)}}|\bm{Z}_{\mathbb{W}^{(N)}},\bm{\theta})) \nonumber\\
	&= -\frac{1}{2}\mathrm{tr}\left((\bm{\alpha_{\mathbb{W}^{(N)}}\alpha}_{\mathbb{W}^{(N)}}^{T}\!-\!\bm{A}_{\mathbb{W}^{(N)}}^{-1})\frac{\partial \bm{A}_{\mathbb{W}^{(N)}}}{\partial \theta_{j}}\right)
\end{align}
for $\frac{\partial \bm{A}_{\mathbb{W}^{(N)}}}{\partial \theta_{j}}$ determined using the partial derivatives of the kernel $k(\cdot,\cdot)$ due to \eqref{eq:AW}. 
The step size is defined recursively for each dimension $i=1,\ldots,d$ through 
\begin{align}\label{eq:RPROP_step}
    \Delta_i^{(N+1)}= 
    \begin{dcases}
        \eta^{+}\Delta_i^{(N)} ,&\quad \nabla^{N+1}_{\mathrm{nll},i} \nabla^{N}_{\mathrm{nll},i} > 0\\
        \eta^{-}\Delta_i^{(N)} ,&\quad \nabla^{N+1}_{\mathrm{nll},i} \nabla^{N}_{\mathrm{nll},i} < 0\\
        \Delta_i^{(N)} ,&\quad \nabla^{N+1}_{\mathrm{nll},i} \nabla^{N}_{\mathrm{nll},i} = 0\\
    \end{dcases}
\end{align}
for the update factors $\eta^+,\eta^-\in\mathbb{R}^+$ with $\eta^{+} \!> 1$ and $\eta^{-} \!< 1 $. The updates of the step size $\bm{\Delta}$ follow the simple idea to take larger steps when the descent direction does not change between two iterations as indicated by a positive value of $\nabla^{N+1}_{\mathrm{nll},i} \nabla^{N}_{\mathrm{nll},i}$. When this product becomes negative, the sign of the derivative has changed between iterations, such that the optimization went past a local minimum. Therefore, the step size $\bm{\Delta}$ is reduced in this case, leading to simple and intuitive updates of the hyperparameters.



\begin{remark}
We employ the gradient $\bm{\nabla}_{\mathrm{nll}}^{(N)}$ from the previous iteration for computing the hyperparameter updated \eqref{eq:hyp_Update}, which does not depend on the new data pair $(\bm{z}^{(N+1)},y^{(N+1)})$. Therefore, we can re-use the values from the previous update iteration, such that we merely have to compute $\bm{A}_{\mathbb{W}^{(N)}}$ and $\bm{\alpha}_{\mathbb{W}^{(N)}}$ once per update iteration, which significantly reduces computation time. 
\end{remark}







\subsection{Low-Pass Filtering with Gaussian Processes}\label{subsec:GP-filter}


In order to apply the presented sliding window Gaussian process approach for filtering a signal $\bm{y}^{(N)}$, we first note that the time-triggered sampling as introduced in \cref{ass:data} establishes the relationship 
\begin{align}
    N(t)=\left\lfloor\frac{t}{\tau}\right\rfloor+1,
\end{align}
such that the data window $\mathbb{W}^{(N(t))}=\{ (\bm{z}^{(n)},\bm{y}^{(n)}) \}_{n=N(t)+1-\bar{N}}^{N(t)}$ changes over time. 
Since the observations $\bm{y}^{(N(t))}$ are $d$-dimensional vectors, we define separate sets $\mathbb{W}_i^{(N(t))}$, $i=1,\ldots,d$, for each dimension of $\bm{y}^{(N(t))}$. 
For each of these sets, we compute a SW-GP mean function  $\mu_{\mathbb{W}_i^{(N(t))}}(t)$ as defined in \eqref{eq:muW}, which we concatenate in the vector $\bm{\mu}_{\mathbb{W}^{(N(t))}}(t)=\begin{bmatrix}\mu_{\mathbb{W}_1^{(N(t))}}(t)&\cdots& \mu_{\mathbb{W}_d^{(N(t))}}(t)\end{bmatrix}^T$. For notational simplicity, we use indices $\bar{N}$ to denote the values computed using $\mathbb{W}^{(N(t))}$ in the following, e.g., $\bm{\mu}_{\bar{N}}(t)=\bm{\mu}_{\mathbb{W}^{(N(t))}}(t)$. Using this notation, we obtain the filtered signal via\looseness=-1
\begin{align}
    \hat{\bm{x}}(t)=\bm{\mu}_{\bar{N}}(t).
\end{align}

\begin{remark}
When a linear relationship between the target dimensions is known, the proposed method can be directly extended to multi-output GPs using coregionalization \cite{Lederer2021c}.
\end{remark}

For showing that this filter guarantees a uniformly bounded estimation error, we first derive a closed-form expression for the regression error, which allows an intuitive separation of error sources. For this purpose, we  define the subsets of $\mathbb{W}^{(N(t))}$ containing only the first $i$ training pairs as $ \mathbb{W}_i^{(N(t))}=\{ (\bm{z}^{(n)},y^{(n)}) \}_{n=N(t)+1-\bar{N}}^{N(t)+i-\bar{N}}$. Furthermore, we use indices $i$ to denote the values computed using the training set $\mathbb{W}_i^{(N(t))}$, e.g., $\bm{\mu}_i(t)=\bm{\mu}_{\mathbb{W}_i^{(N(t))}}(t)$. 
Based on this notation, we introduce the following auxiliary result.
\begin{lemma}\label{lem:kernelinv}
The inverse of a kernel matrix 
\begin{align}\label{eq:Aw}
    \bm{A}_{\bar{N}}=\begin{bmatrix} \bm{A}_{\bar{N}-1}& \bm{k}_{\bar{N}-1}(t^{(\bar{N})})\\ \bm{k}^T_{\bar{N}-1}(t^{(\bar{N})})& k(t^{(\bar{N})},t^{(\bar{N})})+\sigma_{\mathrm{on}}^2, \end{bmatrix}
\end{align}
can be computed by
\begin{align}
    \!\bm{A}_{\bar{N}}^{-1}&\!=\!\begin{bmatrix}\bm{A}_{\bar{N}-1}^{-1}&0\\0&0\end{bmatrix}\!+\!\frac{1}{\sigma_{\bar{N}-1}^2\!(t^{(\bar{N})})\!+\!\sigma_{\mathrm{on}}^2}\!\!\begin{bmatrix}\bm{a}\\-1 \end{bmatrix}\!\!\begin{bmatrix}\bm{a}^T&-1 \end{bmatrix}\!
\end{align}
with $\bm{a}=\bm{A}_{\bar{N}-1}^{-1}\bm{k}_{\bar{N}-1}(t^{(\bar{N})})$. 
\end{lemma}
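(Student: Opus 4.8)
The plan is to verify the claimed identity directly, by forming the product of $\bm{A}_{\bar{N}}$ with the proposed expression and checking that it equals $\bm{I}_{\bar{N}}$; since the inverse of an invertible matrix is unique, this suffices. Invertibility of both $\bm{A}_{\bar{N}}$ and its leading principal block $\bm{A}_{\bar{N}-1}$ is guaranteed because each has the form $\bm{K}+\sigma_{\mathrm{on}}^2\bm{I}$ with a positive semi-definite kernel matrix $\bm{K}$ and $\sigma_{\mathrm{on}}^2>0$, hence is positive definite. To keep the algebra transparent, I would introduce the shorthand $\bm{B}=\bm{A}_{\bar{N}-1}$, $\bm{k}=\bm{k}_{\bar{N}-1}(t^{(\bar{N})})$, $c=k(t^{(\bar{N})},t^{(\bar{N})})+\sigma_{\mathrm{on}}^2$, and $\bm{a}=\bm{B}^{-1}\bm{k}$, so that $\bm{A}_{\bar{N}}$ is the block matrix with entries $\bm{B}$, $\bm{k}$, $\bm{k}^T$, $c$.

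The crucial observation is that the scalar in the denominator is precisely the Schur complement of $\bm{B}$ in $\bm{A}_{\bar{N}}$. Indeed, from the definition of the posterior variance in \eqref{eq:sigW} one has $\sigma_{\bar{N}-1}^2(t^{(\bar{N})})=k(t^{(\bar{N})},t^{(\bar{N})})-\bm{k}^T\bm{B}^{-1}\bm{k}$, so that $s:=\sigma_{\bar{N}-1}^2(t^{(\bar{N})})+\sigma_{\mathrm{on}}^2=c-\bm{k}^T\bm{a}$. I would then expand the rank-one outer product and write the proposed inverse in explicit block form as
\[
\bm{M}=\begin{bmatrix}\bm{B}^{-1}+\tfrac{1}{s}\bm{a}\bm{a}^T & -\tfrac{1}{s}\bm{a}\\[2pt] -\tfrac{1}{s}\bm{a}^T & \tfrac{1}{s}\end{bmatrix}.
\]
Multiplying $\bm{A}_{\bar{N}}\bm{M}$ block by block and repeatedly substituting the two identities $\bm{B}\bm{a}=\bm{k}$ and $\bm{k}^T\bm{a}=c-s$ (together with symmetry, $\bm{k}^T\bm{B}^{-1}=\bm{a}^T$) collapses the off-diagonal blocks to zero and the diagonal blocks to $\bm{I}_{\bar{N}-1}$ and $1$, respectively. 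This is the routine core of the argument, and I would present it compactly rather than track every entry.

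The only point genuinely requiring care is that the formula be well defined, i.e.\ that the denominator $s$ be nonzero. This follows immediately from the preceding invertibility discussion: the posterior variance is nonnegative, $\sigma_{\bar{N}-1}^2(t^{(\bar{N})})\geq 0$, while $\sigma_{\mathrm{on}}^2>0$, so $s>0$ strictly. I therefore do not anticipate a substantive obstacle; the value of the lemma lies less in its difficulty than in exposing the recursive structure, since it expresses $\bm{A}_{\bar{N}}^{-1}$ through $\bm{A}_{\bar{N}-1}^{-1}$ and the single posterior-variance scalar $\sigma_{\bar{N}-1}^2(t^{(\bar{N})})$, which is exactly the quantity that will later drive the error decomposition and the constant-complexity updates in \cref{alg:SW-GP}.
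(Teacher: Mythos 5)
Your proposal is correct and follows essentially the same route as the paper: both hinge on recognizing that the Schur complement $c-\bm{k}^T\bm{A}_{\bar{N}-1}^{-1}\bm{k}$ equals $\sigma_{\bar{N}-1}^2(t^{(\bar{N})})+\sigma_{\mathrm{on}}^2$ and pass through the identical intermediate block matrix $\bm{M}$, which the paper obtains by citing the block inversion formula and you re-derive by direct multiplication. Your added remarks on positive definiteness and the strict positivity of the denominator are a harmless (and slightly more careful) supplement, not a different argument.
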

\begin{proof}
Due to the definition of $\bm{A}_{\bar{N}}$, the block inversion formula yields
\begin{align}
    &\bm{A}_{\bar{N}}^{-1}\!=\!
    \begin{bmatrix}\bm{A}_{\bar{N}-1}^{-1}+\frac{\bm{a}\bm{a}^T}{\sigma_{\bar{N}-1}^2\!\!(t^{(\bar{N})})+\sigma_{\mathrm{on}}^2}&-\frac{\bm{a}}{\sigma_{\bar{N}-1}^2(t^{(\bar{N})})+\sigma_{\mathrm{on}}^2}\\ -\frac{\bm{a}^T}{\sigma_{\bar{N}-1}^2(t^{(\bar{N})})+\sigma_{\mathrm{on}}^2}& \frac{1}{\sigma_{\bar{N}-1}^2(t^{(\bar{N})})+\sigma_{\mathrm{on}}^2}  \end{bmatrix}\!.\nonumber
\end{align}
Rearranging the terms leads to the result.
\end{proof}

Based on this result, we can derive the following theorem, which we state for scalar systems with $d=1$ for simplicity, but generalization to arbitrary dimension $d$ is straightforward.

\begin{theorem}\label{th:esterr}
Consider a scalar filtering problem for a data stream satisfying \cref{ass:data}.  Then, the estimation error between the SW-GP prediction $\hat{x}(t)=\mu_{\bar{N}}(t)$ and the true state $x(t)$ is given by
\begin{align}\label{eq:est_err}
    \hat{x}(t)-x(t)&= \Delta(t)-\nu_0^{\bar{N}} \tilde{x}^{(1)}\nonumber\\
    &+\sum\limits_{n=1}^{\bar{N}-1}\nu_n^{\bar{N}}\Delta_n+\sum\limits_{n=1}^{\bar{N}}\nu_n^{\bar{N}}\eta_n\tilde{\epsilon}^{(n)},
\end{align}
where \allowdisplaybreaks
\begin{align}
    \eta_{n} &= \frac{\sigma_{n-1}^2(\tilde{t}^{(n)})}{\sigma_{n-1}^2(\tilde{t}^{(n)})+\sigma_{\mathrm{on}}^2}\\
    \nu_n^{\bar{N}} & =  \prod\limits_{i=n}^{\bar{N}-1}-\frac{\sigma_{\mathrm{on}}^2}{\sigma_{i}^2(\tilde{t}^{(n)})+\sigma_{\mathrm{on}}^2}\\
    \Delta_n&=\tilde{x}^{(n+1)}-\tilde{x}^{(n)}+\mu_{n}(\tilde{t}^{(n)})-\mu_{n}(\tilde{t}^{(n+1)})\label{eq:Deltan}\\
    \Delta(t)&=x(t)-\tilde{x}^{(\bar{N})}+\mu_{\bar{N}}(\tilde{t}^{(\bar{N})})-\mu_{\bar{N}}(t)
\end{align}
for $\tilde{t}^{(n)}\!\!=\!t^{(N+n-\bar{N})}$, $\tilde{x}^{(n)}\!\!=\!x^{(N+n-\bar{N})}$ and $\tilde{\epsilon}^{(n)}\!\!=\!\epsilon^{(N+n-\bar{N})}$.
\end{theorem}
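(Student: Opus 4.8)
The plan is to turn the batch estimator $\mu_{\bar N}(t)$ into a recursion over the window index $i=1,\dots,\bar N$, using \cref{lem:kernelinv} to pass from the $(i-1)$-point GP to the $i$-point GP one sample at a time, and then to telescope the resulting scalar recursion and attach the query time $t$ at the end.

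First I would use \cref{lem:kernelinv} to derive a rank-one mean update. Substituting the inversion identity into $\mu_i(t)=\bm{k}_i^T(t)\bm{A}_i^{-1}\bm{y}_i$ and invoking the posterior cross-covariance identity $\sigma_{i-1}^2(t,\tilde t^{(i)})=k(t,\tilde t^{(i)})-\bm{k}_{i-1}^T(t)\bm{A}_{i-1}^{-1}\bm{k}_{i-1}(\tilde t^{(i)})$ together with $\mu_{i-1}(\tilde t^{(i)})=\bm{k}_{i-1}^T(\tilde t^{(i)})\bm{\alpha}_{i-1}$, the full-matrix products collapse and I obtain the online update
\begin{align}
  \mu_i(t) = \mu_{i-1}(t) + \frac{\sigma_{i-1}^2(t,\tilde t^{(i)})}{\sigma_{i-1}^2(\tilde t^{(i)})+\sigma_{\mathrm{on}}^2}\bigl(\tilde y^{(i)} - \mu_{i-1}(\tilde t^{(i)})\bigr).\nonumber
\end{align}

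Next I would specialize this to the most recent training time $t=\tilde t^{(i)}$, where the cross-covariance reduces to the variance $\sigma_{i-1}^2(\tilde t^{(i)})$, so the gain becomes exactly $\eta_i$. Writing $E_i := \mu_i(\tilde t^{(i)})-\tilde x^{(i)}$ and splitting $\tilde y^{(i)}=\tilde x^{(i)}+\tilde\epsilon^{(i)}$ gives $E_i=(1-\eta_i)\bigl(\mu_{i-1}(\tilde t^{(i)})-\tilde x^{(i)}\bigr)+\eta_i\tilde\epsilon^{(i)}$. The one-step prediction error $\mu_{i-1}(\tilde t^{(i)})-\tilde x^{(i)}$ is then re-expressed through $E_{i-1}$ by adding and subtracting $\mu_{i-1}(\tilde t^{(i-1)})$ and $\tilde x^{(i-1)}$; the residual difference terms are precisely $\Delta_{i-1}$ from \eqref{eq:Deltan}. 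This yields the linear scalar recursion $E_i=(1-\eta_i)\bigl(E_{i-1}-\Delta_{i-1}\bigr)+\eta_i\tilde\epsilon^{(i)}$, with base case $E_1=-(1-\eta_1)\tilde x^{(1)}+\eta_1\tilde\epsilon^{(1)}$ read off from the single-sample posterior $\mu_1(\tilde t^{(1)})=\eta_1\tilde y^{(1)}$.

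Then I would unroll this recursion from $i=\bar N$ down to the base case. Each backward substitution multiplies the accumulated terms by a factor $1-\eta_i=\sigma_{\mathrm{on}}^2/(\sigma_{i-1}^2(\tilde t^{(i)})+\sigma_{\mathrm{on}}^2)$, so the products of these gains collect exactly into the coefficients $\nu_n^{\bar N}$, while the telescoping separates the three error sources: the propagated initial value $\tilde x^{(1)}$, the mismatch terms $\Delta_n$, and the noise terms $\eta_n\tilde\epsilon^{(n)}$. Finally, the query-time estimate $\hat x(t)=\mu_{\bar N}(t)$ is linked to the training-time value $\mu_{\bar N}(\tilde t^{(\bar N)})$ by adding and subtracting $\mu_{\bar N}(\tilde t^{(\bar N)})$ and $\tilde x^{(\bar N)}$, whose leftover is exactly $\Delta(t)$; combining this with the unrolled expression for $E_{\bar N}$ produces \eqref{eq:est_err}. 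I expect the main obstacle to be the bookkeeping in this telescoping step: tracking which gain products attach to each $\Delta_n$ and each noise term, folding the base case $i=1$ into the same product structure, and handling the signs so the coefficients match $\nu_n^{\bar N}$, $\eta_n$ and the $\Delta(t)$ correction. The conceptual core, namely the rank-one update from \cref{lem:kernelinv} and the identification of the residual differences with $\Delta_n$ and $\Delta(t)$, is \emph{straightforward}; the work lies in the careful index and sign accounting.
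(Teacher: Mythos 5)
Your plan follows essentially the same route as the paper's proof: an induction over the window index driven by the rank-one inverse update of \cref{lem:kernelinv}, which yields the scalar error recursion with gain $1-\eta_i$ and mismatch term $\Delta_{i-1}$, is then unrolled into products of gains, and is finally combined with the $\Delta(t)$ correction to pass from $\tilde{t}^{(\bar{N})}$ to the query time $t$. The one caveat is the sign bookkeeping you already flag as the main obstacle: your (verifiably correct) recursion $E_i=(1-\eta_i)(E_{i-1}-\Delta_{i-1})+\eta_i\tilde{\epsilon}^{(i)}$ has strictly positive gains, so the unrolled products equal $\prod_{i=n+1}^{\bar{N}}(1-\eta_i)=|\nu_n^{\bar{N}}|$ rather than the alternating-sign $\nu_n^{\bar{N}}$ of \eqref{eq:est_err}, and the $\Delta_n$ sum and the noise sum come out with opposite signs instead of sharing the common factor $\nu_n^{\bar{N}}$ --- a discrepancy that traces to sign inconsistencies in the paper's own statement and proof (which also equates $\hat{x}(t)-x(t)$ with an expression derived for $x(t)-\mu_{\bar{N}}(t)$) rather than to any flaw in your argument.
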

\begin{proof}
We prove this theorem by induction, showing that for any $j=1,\ldots,\bar{N}$ it holds that the error $e_j=\tilde{x}^{(j)}-\mu_{j}(\tilde{t}^{(j)})$ satisfies
\begin{align}\label{eq:indass}
    e_j&= \sum\limits_{n=1}^{j}\eta_n\nu_n^j\tilde{\epsilon}^{(n)}
    +\sum\limits_{n=1}^{j-1}\nu_n^j\Delta_n 
    +\nu_0^j\tilde{x}^{(1)}.
\end{align}
For the induction basis $j=1$, we have
\begin{align}
    \mu_{1}(\tilde{t}^{(1)})=\frac{\tilde{y}^{(1)}k(\tilde{t}^{(1)},\tilde{t}^{(1)})}{k(\tilde{t}^{(1)},\tilde{t}^{(1)})+\sigma_{\mathrm{on}}^2},\nonumber
\end{align}
where $\tilde{y}^{(1)}=y^{(N+1-\bar{N})}$. Since $\sigma_{0}^2(\tilde{t}^{(1)})=k(\tilde{t}^{(1)},\tilde{t}^{(1)})$, it is straightforward to see that 
\begin{align}
    e_1&=\frac{\sigma_{0}^2(\tilde{t}^{(1)})}{\sigma_{0}^2(\tilde{t}^{(1)})+\sigma_{\mathrm{on}}^2}\tilde{\epsilon}^{(1)} + \frac{\sigma_{\mathrm{on}}^2}{\sigma_{0}^2(\tilde{t}^{(1)})+\sigma_{\mathrm{on}}^2}\tilde{x}^{(1)},\nonumber
\end{align}
which proves the induction basis. For showing the induction step, note that
\begin{align}
    \bm{y}_{j}^T\bm{A}_{j}^{-1}\left(\bm{k}_{j}(\tilde{t}^{(j)})+\sigma_{\mathrm{on}}^2\bm{1}_j\right)=\tilde{y}^{(j)}.\nonumber
\end{align}
Moreover, due to \cref{lem:kernelinv}, we have
\begin{align}
    \bm{A}_{j}^{-1}\bm{1}_j=\frac{1}{\sigma_{j-1}^2(\tilde{t}^{(j)})\!+\!\sigma_{\mathrm{on}}^2}\begin{bmatrix}\bm{A}_{j-1}^{-1}\bm{k}_{j-1}(\tilde{t}^{(j)})\\-1 \end{bmatrix},\nonumber
\end{align}
such that we obtain
\begin{align}
    e_j&=
    \tilde{\epsilon}^{(j)}-\frac{\sigma_{\mathrm{on}}^2(y^{(j)}-\mu_{j-1}(\tilde{t}^{(j)}))}{\sigma_{j-1}^2(\tilde{t}^{(j)})+\sigma_{\mathrm{on}}^2}.\nonumber
\end{align}
Decomposing the right side into noise and signal components yields
\begin{align}
    e_j&=\eta_j\tilde{\epsilon}^{(j)}-\frac{\sigma_{\mathrm{on}}^2(\tilde{x}^{(j)}-\mu_{j-1}(\tilde{t}^{(j)}))}{\sigma_{j-1}^2(\tilde{t}^{(j)})+\sigma_{\mathrm{on}}^2}.\nonumber
\end{align}
Using the definition of $\Delta_n$ in \eqref{eq:Deltan}, the second term on the right-hand side results in
\begin{align}
    \!\frac{\sigma_{\mathrm{on}}^2(\tilde{x}^{(j)}\!-\!\mu_{j-1}(\tilde{t}^{(j)}))}{\sigma_{j-1}^2(\tilde{t}^{(j)})+\sigma_{\mathrm{on}}^2}&\!=\!
    \frac{\sigma_{\mathrm{on}}^2\Delta_{j-1}}{\sigma_{j-1}^2(\tilde{t}^{(j)})\!+\!\sigma_{\mathrm{on}}^2}\!+\!
    \frac{\sigma_{\mathrm{on}}^2e_{j-1}}{\sigma_{j-1}^2(\tilde{t}^{(j)})\!+\!\sigma_{\mathrm{on}}^2}.\nonumber
\end{align}
Therefore, the induction assumption \eqref{eq:indass} yields
\begin{align}
    e_j&=\eta_j\tilde{\epsilon}^{(j)}-\frac{\sigma_{\mathrm{on}}^2 \sum\limits_{n=1}^{j-1}\eta_n\nu_n^{j-1}\tilde{\epsilon}^{(n)}
     }{\sigma_{j-1}^2(\tilde{t}^{(j)})+\sigma_{\mathrm{on}}^2}\nonumber\\
     &-\frac{\sigma_{\mathrm{on}}^2\left(\Delta_{j-1}+\sum\limits_{n=1}^{j-2}\nu_n^{j-1}\Delta_n +\nu_0^{j-1}\tilde{x}^{(1)}
    \right)}{\sigma_{j-1}^2(\tilde{t}^{(j)})+\sigma_{\mathrm{on}}^2},\nonumber
\end{align}
which is identical to \eqref{eq:indass} and therefore, concludes the induction. Finally, it remains to bound the error resulting from predicting into the future, i.e., when not predicting at the training sample $\tilde{t}^{(N)}$, but at an arbitrary point $t>\tilde{t}^{(N)}$. Since the time between these two points is bounded by $\tau$ due to the time-triggered sampling scheme, it follows that
\begin{align}\label{eq:pred_err}
    x(t)-\mu_{\bar{N}}(t)=\tilde{x}^{(\bar{N})}-\mu_{\bar{N}}(\tilde{t}^{(\bar{N})})+\Delta(t)
\end{align}
Therefore, we obtain 
\begin{align}
    x(t)\!-\!\mu_{\bar{N}}(t)&\!=\!\Delta(t)\!-\!\nu_0^{\bar{N}} \tilde{x}^{(1)}\!+\!\sum\limits_{n=1}^{\bar{N}}\!\nu_n^{\bar{N}}\eta_n\tilde{\epsilon}^{(n)}\!+\!\sum\limits_{n=1}^{\bar{N}-1}\!\nu_n^{\bar{N}}\Delta_n\nonumber
\end{align}
by substituting \eqref{eq:indass} in \eqref{eq:pred_err}, which concludes the proof.
\end{proof}

This result allows a straightforward and intuitive interpretation of the estimation error \eqref{eq:est_err}, which can be essentially decoupled into two sources, namely the signal attenuation and the unfiltered noise components. Signal attenuation corresponds to the error, which would result from the SW-GP filter in the absence of any noise due to the amplitude and variation of the signal $x(\cdot)$, and it corresponds to the first three terms in \eqref{eq:est_err}. The first summand $\Delta(t)$ results from predicting the state at a future time instant. If the signal is only predicted at sampling times $t=n\tau$, $n\in\mathbb{N}$, it is trivial to see that $\Delta(n\tau)=0$. The term $\nu_0^{\bar{N}}\tilde{x}^{(1)}$ corresponds to the attenuation of the signal amplitude due to the prior distribution with mean $m(t)=0$, while $\sum_{n=1}^{\bar{N}-1}\nu_n^{\bar{N}}\Delta_n$ is the error resulting from variation in the signal. While the latter would be approximately equal to $0$ for constant signals $x(\cdot)$, the former is non-zero for non-zero signals. Both these terms crucially depend on the parameters $\nu_n^{\bar{N}}$, $n=1,\ldots,\bar{N}$, which allow intuitive insight into the estimation process. It can be straightforwardly observed that $\nu_n^{\bar{N}}\approx 1$ if $\sigma_{\mathrm{on}}\gg \sigma_{n}(\tilde{t}^{(n+1)})$ for all $i$. This implies that the signal and mean variations $\Delta_n$ are almost summed up, which can yield a high estimation error. However, this behavior is to be expected since large observation noise standard deviations $\sigma_{\mathrm{on}}$ mean that the signal is heavily smoothed, thereby, strongly attenuating high frequency signals. In contrast, small values $\sigma_{\mathrm{on}}\approx \sigma_{n}(\tilde{t}^{(n+1)})$ are beneficial for signal reconstruction since they result in $|\nu_n^{\bar{N}}|\appropto \frac{1}{2^{N-n}}$. Therefore, variation in the signal $x(\cdot)$ several time steps before the estimation barely has any effect on the reconstruction error. 
While a small value of $\sigma_{\mathrm{on}}$ is beneficial for low signal attenuation, it also barely attenuates the noise. This becomes clear in the last term in \eqref{eq:est_err}, which represents the unfiltered noise components. For $\sigma_{\mathrm{on}}\approx \sigma_{n}(\tilde{t}^{(n+1)})$, it follows that $\eta_n\approx 1$, such that noise is almost directly passed through to the estimated state $\hat{x}(t)$. In contrast, high observation noise standard deviations $\sigma_{\mathrm{on}}\gg \sigma_{n}(\tilde{t}^{(n+1)})$ lead to $\eta_n\approx 0$, such that noise is almost completely removed from the signal.  

This clearly demonstrates the conflict between noise attenuation and signal recovery in SW-GP filters, which is a well-known problem in systems theory and filter design~\cite{Orfanidis2016}. This issue is commonly resolved by defining frequency ranges, where either of those two behaviors is dominant, leading to, e.g., low-pass filters. In SW-GP filters, we pursue a similar approach by exploiting the hyperparameters $\bm{\theta}$, which allow us to directly modify the observation noise standard deviation $\sigma_{\mathrm{on}}$, but also $\sigma_{n}(\tilde{t}^{(n)})$ through, e.g., the length scales $\bm{l}$ when using a squared exponential kernel \eqref{eq:SE}. Thereby, the behavior can be tuned for noise attenuation or low signal disturbance, depending on the measurements $\bm{y}^{(n)}$. However, in contrast to traditional filtering techniques, this approach has the advantage that hyperparameters can be automatically tuned using log-likelihood maximization. In fact, it is even possible to adapt the hyperparameters online, such that adaptive filters can be straightforwardly implemented using Alg.~\ref{alg:SW-GP}.

While \cref{th:esterr} allows a straightforward interpretation, it cannot be calculated in practice as the error realizations~$\tilde{\epsilon}^{(n)}$, $n\!=\!1,\ldots,\bar{N}$ are generally unknown. Through additional assumptions on the observation noise and the employed kernel, these issues can easily be overcome. For example, assuming bounded noise results in the following error bound.
\begin{corollary}
Consider a scalar filtering problem for a data stream satisfying \cref{ass:data}. Assume a Lipschitz continuous kernel is employed for computing the sliding window Gaussian process mean \eqref{eq:muW} and assume that the noise is bounded by a constant $\bar{\epsilon}\in\mathbb{R}_{0,+}$, i.e., $|\epsilon^{(i)}|\leq \bar{\epsilon}$ for all $i\in\mathbb{N}$. Then, the estimation error between $\hat{x}(t)=\mu_{\bar{N}}(t)$ and $x(t)$ is uniformly bounded with upper bound
\begin{align}
    c(t)=|\nu_0^{\bar{N}} \tilde{x}^{(1)}|
    +\sum\limits_{n=1}^{\bar{N}}|\nu_n^{\bar{N}}|(\eta_n\bar{\epsilon}+(L_{\mu_{n}}\!+L_x)\tau),
\end{align}
where $L_{\mu_{n}}$ and $L_x$ denote the Lipschitz constants of the SW-GP mean functions and the signal, respectively.
\end{corollary}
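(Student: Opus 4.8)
The plan is to start directly from the exact error decomposition \eqref{eq:est_err} established in \cref{th:esterr} and to bound it term by term with the triangle inequality. Writing out the four contributions and applying the triangle inequality immediately gives
\begin{align}
    |\hat{x}(t)-x(t)|&\leq |\Delta(t)| + |\nu_0^{\bar{N}}\tilde{x}^{(1)}| \nonumber\\
    &+\sum_{n=1}^{\bar{N}-1}|\nu_n^{\bar{N}}||\Delta_n| + \sum_{n=1}^{\bar{N}}|\nu_n^{\bar{N}}|\eta_n|\tilde{\epsilon}^{(n)}|,\nonumber
\end{align}
where I use that $\eta_n\geq 0$, which holds because $\eta_n$ is the ratio of a posterior variance $\sigma_{n-1}^2(\tilde{t}^{(n)})\geq 0$ and the strictly positive quantity $\sigma_{n-1}^2(\tilde{t}^{(n)})+\sigma_{\mathrm{on}}^2$. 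The amplitude-attenuation term $|\nu_0^{\bar{N}}\tilde{x}^{(1)}|$ already appears unchanged in the claimed bound, so it only remains to control the noise contribution, the variation terms $\Delta_n$, and the prediction term $\Delta(t)$.

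For the noise contribution I would simply insert the noise bound $|\tilde{\epsilon}^{(n)}|\leq\bar{\epsilon}$, which produces $\sum_{n=1}^{\bar{N}}|\nu_n^{\bar{N}}|\eta_n\bar{\epsilon}$ and matches the corresponding summand of $c(t)$. The variation terms require both Lipschitz hypotheses: since the sampling is time-triggered with spacing $\tau$, consecutive sample times obey $|\tilde{t}^{(n+1)}-\tilde{t}^{(n)}|=\tau$, so Lipschitz continuity of the signal bounds $|\tilde{x}^{(n+1)}-\tilde{x}^{(n)}|\leq L_x\tau$, while Lipschitz continuity of the SW-GP mean bounds $|\mu_n(\tilde{t}^{(n)})-\mu_n(\tilde{t}^{(n+1)})|\leq L_{\mu_n}\tau$. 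Combining both through the definition \eqref{eq:Deltan} yields $|\Delta_n|\leq(L_{\mu_n}+L_x)\tau$. Here the Lipschitz-kernel assumption enters: because $\mu_n(\cdot)=\bm{k}_n^T(\cdot)\bm{\alpha}_n$ from \eqref{eq:muW} is a finite linear combination of kernel evaluations, a Lipschitz kernel guarantees that each $\mu_n(\cdot)$ is Lipschitz with a finite constant $L_{\mu_n}$, so the $L_{\mu_n}$ appearing in the statement are well defined.

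The one point that needs care is the prediction term $\Delta(t)$, which sits outside the sum in \eqref{eq:est_err}. I would treat it exactly as a variation term: the time-triggered scheme guarantees $|t-\tilde{t}^{(\bar{N})}|\leq\tau$, so the same two Lipschitz arguments give $|\Delta(t)|\leq(L_{\mu_{\bar{N}}}+L_x)\tau$. To fold this into the final sum, I observe that $\nu_{\bar{N}}^{\bar{N}}$ is an empty product and hence equals $1$, so $|\Delta(t)|\leq|\nu_{\bar{N}}^{\bar{N}}|(L_{\mu_{\bar{N}}}+L_x)\tau$. This lets the isolated $\Delta(t)$ contribution merge with $\sum_{n=1}^{\bar{N}-1}|\nu_n^{\bar{N}}|(L_{\mu_n}+L_x)\tau$ to form precisely $\sum_{n=1}^{\bar{N}}|\nu_n^{\bar{N}}|(L_{\mu_n}+L_x)\tau$, which together with the noise and amplitude terms reproduces $c(t)$. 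The main obstacle is therefore not analytic but bookkeeping: correctly identifying $\Delta(t)$ as the missing $n=\bar{N}$ summand via the empty-product convention $\nu_{\bar{N}}^{\bar{N}}=1$, and justifying the existence of the mean-function Lipschitz constants $L_{\mu_n}$ from the Lipschitz-kernel assumption.
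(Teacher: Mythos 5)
Your proposal is correct and follows essentially the same route as the paper's proof: apply the triangle inequality to the exact decomposition from \cref{th:esterr}, bound $|\Delta_n|$ and $|\Delta(t)|$ by $(L_{\mu_n}+L_x)\tau$ via Lipschitz continuity of the mean (from the Lipschitz kernel) and of $x(\cdot)$, and insert the noise bound $\bar{\epsilon}$. Your explicit bookkeeping step identifying $\Delta(t)$ as the $n=\bar{N}$ summand through the empty-product convention $\nu_{\bar{N}}^{\bar{N}}=1$ is a detail the paper leaves implicit in ``the remainder follows from \cref{th:esterr},'' but it is exactly the right justification.
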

\begin{proof}
Lipschitz continuity of kernel used in the sliding window GP implies Lipschitz continuity of the mean $\mu_{n}(\cdot)$ for all $n=1,\ldots,\bar{N}$ \cite{Lederer2019} and boundedness of $f(\cdot,\cdot,\cdot)$ implies Lipschitz continuity of $x(\cdot)$. Therefore, we obtain 
\begin{align}
    \Delta_n&\leq (L_{\mu_{n}}\!+L_x)\tau)\\
    \Delta(t)&\leq (L_{\mu_{n}}\!+L_x)\tau).
\end{align}
The remainder of the proof follows from \cref{th:esterr}.
\end{proof}

\begin{remark}\label{remark:linear}
If the estimate $\hat{\bm{x}}(t)=\mu_{\bar{N}}(t)$ is determined only at sampling times $t=N\tau$, we can write it as a linear combination of previous measurements $\bm{y}^{(n)}$, $n=N-\bar{N}+1,\ldots,N$, e.g., for $d=1$ we have $\hat{x}(N\tau)=\bm{c}^T\bm{y}_{\bar{N}}$, 
where $\bm{c}^T=\bm{k}_{\bar{N}}(N\tau)(\bm{K}_{\bar{N}}+\sigma_{\mathrm{on}}^2\bm{I}_{\bar{N}})^{-1}$.  Therefore, the SW-GP filter is a linear finite impulse response filter, whose filter coefficients are adapted to the input signal. Note that for stationary kernels $k(\cdot,\cdot)$ \cite{Rasmussen2006} and fixed hyperparameters, the filter coefficients are even constant, such that moving average and Sawitzky-Golay type filters \cite{Orfanidis2016} can be realized.
\end{remark}

\section{Numerical Evaluation}\label{sec:eval}

In order to demonstrate the low-pass behavior and adaptation capabilities of SW-GPs, we execute three different numerical experiments. In \cref{subsec:freqResp}, we illustrate the dependency of the SW-GP filters low-pass behavior using its frequency response. In \cref{subsec:noiseAtt}, we demonstrate the adaptivity of the SW-GP due to the online hyperparameter optimization, which allows noise attenuation over far wider frequency ranges than existing filters. Finally, we employ the SW-GP filter in a robot control simulation to illustrate its practical applicability.\looseness=-1

\subsection{Frequency Response Analysis}\label{subsec:freqResp}

For demonstrating the low-pass behavior of SW-GP filters and its dependency on the hyperparameters, we use a squared exponential kernels \eqref{eq:SE} with fixed hyperparameters $\sigma_f=1$, $\sigma_{\mathrm{on}}=\sqrt{0.1}$ and use a window size $\bar{N}=200$. Moreover, we exemplarily choose different length scales $l=0.1$, $l=0.4$ and $l=0.01$ to illustrate their impact on the cutoff frequency. We apply it to sinusoidal input signals $x(t)=\sin(2\pi ft)$ with frequencies in the range of $10^{-2}$ \si{Hz} to $10^{2}$ \si{Hz}. The signals are sampled with a rate of $1kHz$, and the SW-GP filter is updated and evaluated at the same rate. As exemplarily illustrated in \cref{fig:filtSignal} for an input signal with $f=10^{1.5}\approx 31.62$ \si{Hz}, the filtered signal is clearly sinusoidal again which is a direct consequence of \cref{remark:linear}. 

\begin{figure}
    \centering
    \def\file{plots/filteredSignal.txt}
	\vspace{0.2cm}
	\begin{tikzpicture}
	\begin{axis}[xlabel={time [s]}, ylabel={signal},
	xmin=0, ymin = -1.2, xmax = 0.2, ymax = 1.2, height =4cm,
	x label style={yshift=0.2cm}, legend pos=north east, name=plot1,legend columns=2]
	\addplot[black, thick]	table[x = tfull, y  = Xfull]{\file};
	\addplot[blue,dashed, thick]	table[x = tfull, y  = Xfilt]{\file};
	
	\legend{original signal, filtered signal};
	\end{axis}
	\end{tikzpicture}
	\vspace{-0.3cm}
    \caption{When filtering a sinusoidal signal with the SW-GP, the resulting output signal is sinusoidal with reduced amplitude and small phase shift.}
    \label{fig:filtSignal}
\end{figure}
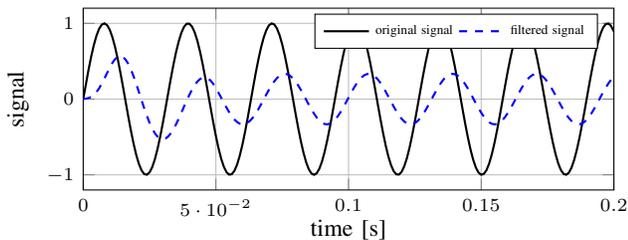

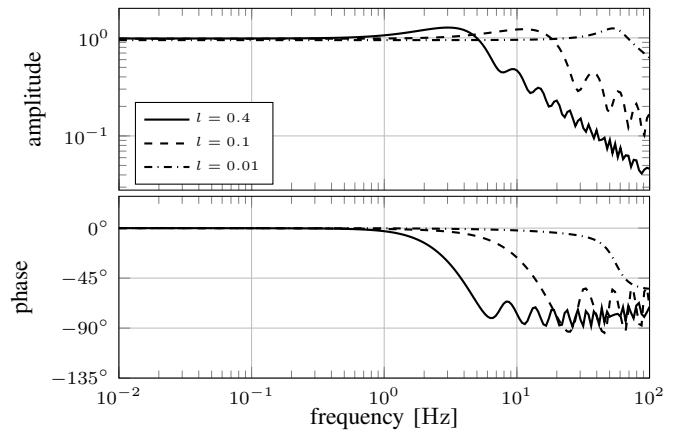
\begin{figure}
    \centering
    \def\file{plots/nominal_low_pass.txt}
	\vspace{0.2cm}
	\begin{tikzpicture}
	\begin{loglogaxis}[ylabel={amplitude},
	xmin=0.01, ymin = 0, xmax = 100, ymax = 2, height =4cm,
	x label style={yshift=0.2cm}, legend pos=south west, name=plot1,xticklabels={,,}]
	\addplot[black,thick]	table[x = f, y  = amp04_GP]{\file};
	\addplot[black,dashed,thick]	table[x = f, y  = amp01_GP]{\file};
	\addplot[black,dashdotted,thick]	table[x = f, y  = amp001_GP]{\file};
	\legend{$l=0.4$, $l=0.1$, $l=0.01$};
	\end{loglogaxis}
	\begin{semilogxaxis}
	[xlabel={frequency [\si{Hz}]},ylabel={phase},
	xmin=0.01, ymin = -2.3562, xmax = 100, ymax = 0.5, height =4cm,
	x label style={yshift=0.2cm}, legend pos=south west, at=(plot1.south), anchor=south,
	yshift=-2.5cm,
	ytick={0,-0.7854,-1.5708,-2.3562},yticklabels={$0^{\circ}$,$-45^{\circ}$,$-90^{\circ}$,$-135^{\circ}$}]
	\addplot[black,thick]	table[x = f, y  = phase04_GP]{\file};
	\addplot[black,dashed,thick]	table[x = f, y  = phase01_GP]{\file};
	\addplot[black,dashdotted,thick]	table[x = f, y  = phase001_GP]{\file};
	\end{semilogxaxis}
	\end{tikzpicture}
	\vspace{-0.7cm}
    \caption{The SW-GP filter shows a low pass behavior for sinusoidal input signals with increasing cutoff frequency for smaller length scales $l$.}
    \label{fig:lowPassBehavior}
\end{figure}

Therefore, we analyze the frequency response of the SW-GP filter using a Bode plot, which is depicted in \cref{fig:lowPassBehavior}. It can be clearly seen that the SW-GP exhibits a low-pass behavior, with the signal amplified near its cut-off frequency and non-continuous decay. Furthermore, it can be observed that the cutoff frequency strongly depends on the length scale, which leads to, e.g., a cutoff at approximately \SI{20}{Hz} for $l=0.1$. In fact, we empirically found that the cut-off frequency $f_c$ of the SW-GP filter can be approximately computed from its hyperparameters via $f_c\approx \frac{2}{l}$ for large enough windows $\bar{N}$, where $l$ is the length scale of the squared exponential kernel \eqref{eq:SE}. 
Since the proposed hyperparameter optimization scheme in \cref{subsec:SW-GP} can tune the hyperparameters online, this relationship allows to automatically adapt the cut-off frequency to the observed signal. This beneficial feature is investigated in detail in the following subsection.

\subsection{Adaptive Noise Attenuation}\label{subsec:noiseAtt}

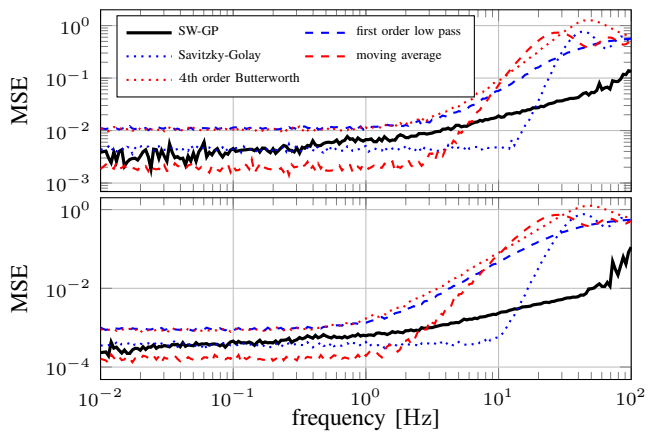
\begin{figure}[t]
    \centering
	\vspace{0.2cm}
	\begin{tikzpicture}
	\def\file{plots/Gaussian_noise_small.txt}
	\begin{loglogaxis}[ ylabel={MSE},
	xmin=0.01, ymin = 0, xmax = 100, ymax = 2, height =4cm,
	x label style={yshift=0.2cm}, legend pos=north west, name=plot1,legend columns=2, xticklabels={,,}, name=plot1]
	\addplot[black,very thick]	table[x = f, y  = mse_DWGP]{\file};
	\addplot[blue,dashed, thick]	table[x = f, y  = mse_LP]{\file};
	\addplot[blue, dotted, thick]	table[x = f, y  = mse_SG]{\file};
	\addplot[red, dashed, thick]	table[x = f, y  = mse_MM]{\file};
	\addplot[red, dotted, thick]	table[x = f, y  = mse_Butter]{\file};
	
	\legend{SW-GP, first order low pass,
	Savitzky-Golay, moving average, $4$th order Butterworth};
	\end{loglogaxis}
	\def\file{plots/Uniform_noise_small.txt}
	\begin{loglogaxis}[xlabel={frequency [\si{Hz}]}, ylabel={MSE},
	xmin=0.01, ymin = 0, xmax = 100, ymax = 2, height =4cm,
	x label style={yshift=0.2cm}, legend pos=north west, legend columns=2,at=(plot1.south), anchor=south,
	yshift=-2.5cm,]
	\addplot[black,very thick]	table[x = f, y  = mse_DWGP]{\file};
	\addplot[blue,dashed, thick]	table[x = f, y  = mse_LP]{\file};
	\addplot[blue, dotted, thick]	table[x = f, y  = mse_SG]{\file};
	\addplot[red, dashed, thick]	table[x = f, y  = mse_MM]{\file};
	\addplot[red, dotted, thick]	table[x = f, y  = mse_Butter]{\file};
	
	\end{loglogaxis}
	\end{tikzpicture}
	\vspace{-0.7cm}
    \caption{The estimation error for SW-GPs grows continuously with the frequency for signals perturbed by Gaussian noise (top) and uniform noise (bottom). In contrast, classical filter approaches exhibit a low error for low frequencies, but after the cut-off frequency, no signal can be recovered.}
    \label{fig:Gausianlow}
\end{figure}

In order to evaluate the adaptive attenuation of noise of the SW-GP filter over a range of input signal frequencies, we employ the sinusoidal input signals from \cref{subsec:freqResp}, but perturb them by noise. We run the SW-GP filter for each frequency for \SI{1}{s} and average over $20$ repetitions of the experiment. The hyperparameters of the SW-GP are initialized as $\sigma_f=1$, $l=0.1$ and $\sigma_n=\sqrt{0.1}$ and adapted using $\eta^+=1.2$ and $\eta^-=0.5$ in all simulations. The performance of the SW-GP filter is evaluated in terms of the mean squared error (MSE) and compared to a first order low-pass, a fourth order Butterworth, a moving average, and a third order Sawitzki-Golay filter. All finite impules response filters consider a window of data with $\bar{N}=200$.

The resulting MSE curves for Gaussian noise with variance $\sigma_n^2=0.1$ and uniform noise with bounds $\pm 0.5\sqrt{0.1}$ are depicted in \cref{fig:Gausianlow}. It can be observed, that the SW-GP filter outperforms the infinite impulse response filters over the whole frequency range. While it yields slightly higher errors than the finite impulse response filter approaches for low frequencies, the SW-GP filter is capable of adapting to the signal frequency, such that it attenuates the signal significantly less in higher frequency domains. Overall, this leads to a MSE curve which slowly grows, while classical filters are limited to a prescribed frequency band for the input signal. Therefore, SW-GP filters can be used in applications, where prior knowledge about the frequency distribution of the input signal is not available at the time of the filter design.

\subsection{Low-Pass Filtering for Robot Control}\label{subsec:robotControl}

We demonstrate the practical applicability of the proposed SW-GP filter in a simulation of a planar robotic manipulator with two degrees of freedom \cite{Murray1994}, which is controlled using a PD control law $\bm{u}=-k_p(\bm{q}-\bm{q}_{\mathrm{ref}})-k_d(\dot{\bm{q}}-\dot{\bm{q}}_{\mathrm{ref}})$ with state $\bm{x}=[\bm{q}^T\ \dot{\bm{q}}^T]^T$ and reference $\bm{x}_{\mathrm{ref}}=[\bm{q}^T_{\mathrm{ref}}\ \dot{\bm{q}}^T_{\mathrm{ref}}]^T$. The gains are set to $k_p=100$, $k_d=10$ and the control law is run at a frequency of \SI{1}{kHz} using zero order hold. The reference signal is chosen to be sinusoidal with increasing frequency, which is realized through the function $\bm{q}_{\mathrm{ref}}=[\sin(0.1t^2)\ 0.5\sin(0.1t^2)]^T$. We assume access to measurements of the joint angles $\bm{q}$, which are perturbed by Gaussian noise with noise standard deviation $\sigma_{\mathrm{on}}=0.1$. Angular velocities are obtained through numerical differentiation with finite differences. The SW-GP is initialized as in \cref{subsec:noiseAtt}, but we consider only windows with $\bar{N}=50$. In our simulations, this results in average update and prediction times of \SI{1.7}{ms} and \SI{0.09}{ms} with standard deviations of \SI{267}{\mu s} and \SI{11}{\mu s}, respectively. Therefore, updates of the SW-GP are performed at \SI{333}{Hz}, while predictions are computed at \SI{1}{kHz}, i.e., the SW-GP is updated every third measurement.\looseness=-1

The resulting mean squared error between the state $\bm{x}$ of the system controlled by noiseless and filtered signal averaged over $20$ simulation runs is illustrated in \cref{fig:robotSim}. In comparison with a controller employing  the unfiltered signal, the improvement due to the SW-GP filter can clearly be seen. While the error using the filtered signal slightly increases at the beginning, this is in accordance with the results from \cref{subsec:noiseAtt} as the reference trajectory $\bm{q}_{\mathrm{ref}}$ exhibits a smaller frequency for small $t$. However, the overall increase of the MSE over time is minor. This clearly demonstrates the applicability of SW-GP filters in more complex systems such as control architectures, where the filtered signal has a direct effect on the evolution of the signal itself.

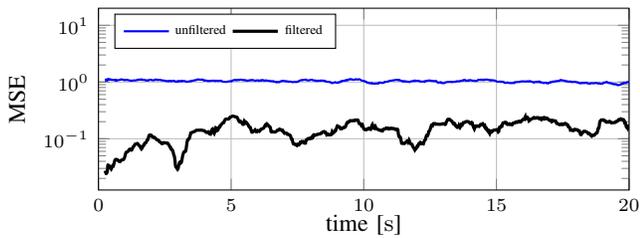
\begin{figure}[t]
    \centering
    \def\file{plots/robotSim.txt}
	\vspace{0.2cm}
	\begin{tikzpicture}
	\begin{semilogyaxis}[xlabel={time [\si{s}]}, ylabel={MSE},
	xmin=0, ymin = 0, xmax = 20, ymax = 20, height =4cm,
	x label style={yshift=0.2cm}, legend pos=north west, name=plot1,legend columns=2]
	\addplot[blue, thick]	table[x = t, y  = mse_unfilt]{\file};
	\addplot[black,very thick]	table[x = t, y  = mse_filt]{\file};
	
	\legend{unfiltered, filtered};
	\end{semilogyaxis}
	\end{tikzpicture}
	\vspace{-0.7cm}
    \caption{The SW-GP strongly attenuates the measurement noise, such that the control error induced by noisy measurements is significantly reduced.}
    \label{fig:robotSim}
\end{figure}

\section{Conclusion}\label{sec:conc}

In this paper, we propose Sliding Window Gaussian process filters, which allow adaptive low-pass filtering of signals. By training a Gaussian process on a subset of previous measurements, updates and predictions can be computed with low computational complexity, such that Gaussian processes become applicable in filtering applications. Moreover, an efficient online hyperparameter optimization scheme leads to an automatic adaptation to the input signal, such that no manual parameter tuning is necessary. Finally, we prove that the proposed method guarantees a bounded estimation error using simple linear algebra identities, and demonstrate its flexibility and efficiency in several simulations.


\bibliographystyle{IEEEtran}
\bibliography{myBib}

\end{document}